\def\BState{\State\hskip-\ALG@thistlm}
\newcommand{\mydriver}{hypertex}
 \renewcommand{\mydriver}{pdftex}
\definecolor{Darkblue}{rgb}{0.1,0,0.8}
\definecolor{Brown}{cmyk}{0,0.81,1.,0.60}
\definecolor{Purple}{cmyk}{0.45,0.86,0,0}
\newtheorem{theorem}{Theorem}[section]
\newtheorem{lemma}[theorem]{Lemma}
\newtheorem{claim}[theorem]{Claim}
\theoremstyle{definition}
\theoremstyle{remark}
\algrenewcommand\algorithmiccomment[2][\normalsize]{{#1\hfill\(\triangleright\) \emph{#2}}}
\DeclareMathOperator{\dist}{dist}
\newcommand{\curr}{\ensuremath{K_{curr}}}
\newcommand{\currA}{\curr^A}
\renewcommand{\emptyset}{\varnothing}
\renewcommand{\part}[1] {\vspace{.10in} {\bf (#1)}}
\def\E{\mathbb{E}}
\def\N{\mathbb{N}}
\def\R{\mathbb{R}}
\def\bf{\mathbf{f}}
\def\la{\langle}
\def\ra{\rangle}
\def\proj{\text{proj}}
\newcommand{\Vol}{\operatorname{Vol}}
\newcommand{\norm}[1]{\left\| #1 \right\|}
\newcommand{\Tighten}{\textsf{Tighten}\xspace}
\newcommand{\Bounded}{\textsf{Bounded}\xspace}
\newcommand{\cbc}{convex body chasing\xspace}
\newcommand{\defeq}{:=}
\newcommand{\sse}{\subseteq}
\begin{document}
\title{A Nearly-Linear Bound for Chasing Nested Convex
  Bodies\footnote{This work was done while Michael B.\ Cohen and Yin Tat
    Lee were at Microsoft Research. Part of this research was done at
    the Simons Institute for the Theory of Computing.}}

\author{C.J.\ Argue \and S\'ebastien Bubeck \and Michael B.\ Cohen \and
  Anupam Gupta \and Yin Tat Lee}

\maketitle
\thispagestyle{empty}

\begin{abstract}
 Friedman and Linial \cite{FL93} introduced the \cbc problem to
explore the interplay between geometry and competitive ratio in metrical task systems.
In \cbc, at each time step $t \in \N$, the online algorithm receives a request in the form
of a convex body $K_t \subset \R^d$ and must output a point $x_t \in K_t$.
The goal is to minimize the total movement between consecutive output points,
where the distance is measured in some given norm.

This problem is still far from being understood. Recently Bansal et
al.~\cite{Bansal} gave an $6^d(d!)^2$-competitive algorithm for the
  \emph{nested} version, where each convex body is
  contained within the previous one. We propose a different
 strategy which is $O(d \log d)$-competitive algorithm for this nested
 \cbc problem. Our algorithm works for any norm.
 This result is almost tight, given an $\Omega(d)$ lower bound for the $\ell_{\infty}$ norm~\cite{FL93}.
\end{abstract}  

\section{Introduction}
\label{sec:introduction}

We consider the \emph{\cbc} problem. This is a problem in online
algorithms: the input is a starting point $x_0\in \R^d$, and the
request sequence consists of convex bodies
$K_1,K_2,\dots, K_t, \dots \subseteq \R^d$. Upon seeing request $K_t$ (but before seeing $K_{t+1}$) we must choose a feasible point
$x_t\in K_t$ within this body. The cost at the $t^{th}$ step is $\|x_t-x_{t-1}\|$, the distance moved at this time step, where $\|\cdot\|$ is some fixed norm. The objective is to minimize the total distance moved
by the algorithm:
\[ \sum_{t=0}^T \|x_{t+1} - x_t\| . \]
We consider the model of
\emph{competitive analysis}, i.e., we want to bound the worst-case ratio between the algorithm's
cost and the optimal cost to serve the request sequence.

\newcommand{\di}{\mathrm{dist}}

This problem was introduced by Friedman and Linial~\cite{FL93}, with the
goal of understanding the interplay between geometry and competitive
ratio in online problems on metric spaces, many of which can be modeled
using the \emph{metrical task systems} framework (MTS)~\cite{BRS}. In
MTS one considers an arbitrary metric space $(X,\di)$, and the request sequence
is given by functions $f_t: X \to \R_+$ which specify the cost to serve the
$t^{th}$ request at each point in $X$. The goal is to minimize the movement
plus service costs. If $(X,\di) = (\R^d, \|\cdot\|)$ and the functions $f_t$
are zero within the convex body $K_t$ (and $\infty$ outside), one gets
the \cbc problem.

The role of the metric geometry is poorly understood both for the general MTS
problem, as well as for the special case of \cbc. The latter
problem is trivial for $d=1$, and Friedman and Linial gave a competitive algorithm
for $d = 2$. However currently there is no known algorithm with a
finite competitive ratio for $d \geq 3$. Friedman and Linial gave a lower bound of
$\Omega(d^{1-1/p})$ when $\|\cdot\| = \|\cdot\|_p$ (the $\ell_p^d$ norm) based on chasing faces of the hypercube; and for
$p=1$ a lower bound of $\Omega(\log d)$ follows from~\cite{BN-mono} (both results hold in the nested version of the problem). We
also know positive results for general $d$ when the convex bodies are
lower-dimensional objects (e.g., lines or planes or affine
subspaces)~\cite{FL93, Sitters, Anto}, but these ideas do not seem to
generalize to chasing full-dimensional objects.

In this paper we restrict to \emph{nested} instances of \cbc, where the
bodies are contained within each other, i.e.,
$K_1 \supseteq K_2\supseteq K_3\supseteq \dots$. In this case, the
optimal offline algorithm at time $T$ is to move to some point in the
final body, $x_1 = x_2 = \dots = x_T = x^* := \proj_{K_T}(x_0)$, and
hence the optimal value is
$C^* = \dist(x_0,K_T) := \min_{x \in K_T} \|x_0 - x\|$.

Bansal et al.~\cite{Bansal} gave a $f(d)$-competitive algorithm for
chasing nested convex bodies in Euclidean space with
$f(d) = 6^d(d!)^2$. Our main theorem improves on their result:
\begin{theorem}[Main Theorem]
  \label{thm:main}
  For any norm, there is an $O(d \log d)$-competitive algorithm for nested \cbc.
\end{theorem}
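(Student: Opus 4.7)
The plan has two layers. \emph{Outer layer (reduction to a bounded variant).} By a standard doubling argument, it suffices to solve the following bounded subproblem: given a value $R$ with the promise that $C^* = \dist(x_0, K_T) \le R$ is known in advance, produce total movement at most $O(d \log d) \cdot R$. Running this bounded subroutine on successive guesses $R_0, 2R_0, 4R_0, \ldots$, restarting each time it exceeds its declared budget, yields a true $O(d \log d)$-competitive algorithm because the phase costs form a geometric series dominated by the last (correct) phase.

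\emph{Inner layer (bounded algorithm via a deep center).} For the bounded subproblem, replace each $K_t$ by $W_t := K_t \cap \{y : \|y - x_0\| \le R\}$; these are nested, all contained in a norm ball of radius $R$, and always contain the offline target. The online point $x_t$ is a canonical ``deep center'' of $W_t$, for instance the center of John's maximum-volume inscribed ellipsoid or the minimizer of the universal $O(d)$-self-concordant barrier of $W_t$. The center is defined purely from $W_t$ as a convex body, so the algorithm is norm-oblivious. I would analyze it with a potential $\Phi_t = \tfrac12 \log \det H_t$ where $H_t$ is the Hessian of the barrier at $x_t$ (equivalently, the log-volume of the associated ellipsoid). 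The key technical lemma is: whenever a new body $K_{t+1}$ cuts the center out of $W_t$, the forced motion in the given norm satisfies $\|x_{t+1} - x_t\| \le O(d) \cdot R \cdot (\Phi_t - \Phi_{t+1})$, where the factor of $d$ comes from John's theorem comparing the given norm to the Dikin ellipsoid, and the potential drop comes from a Bregman/self-concordance bound on the motion of a barrier minimizer under a linear cut. Telescoping gives total motion $O(d \cdot R)$ times the total potential range, which should be $O(\log d)$ on the relevant scales.

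\emph{Main obstacle.} The central difficulty is the constant management in the per-step bound. Both the norm-to-Dikin conversion and the per-cut barrier-minimizer motion estimate naively contribute a factor of $d$, which together would yield only an $O(d^2)$ competitive ratio. Obtaining the sharper $O(d \log d)$ requires an amortized argument: a forced motion of size comparable to the current diameter of $W_t$ must be charged to a full $\log d$ drop in $\Phi$, not merely an $O(1)$ drop. Making this amortization precise while remaining norm-oblivious---so that it works for \emph{any} norm, not only the Euclidean case---is where I expect the bulk of the technical work to lie.
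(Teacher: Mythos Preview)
Your outer layer is fine and matches the paper's Claim~\ref{claim:reduction} (the equivalence of the general, $r$-bounded, and $r$-tightening variants via guess-and-double).

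The inner layer, however, has a genuine gap, and it is not the one you flag. Your potential $\Phi_t = \tfrac12\log\det H_t$ (equivalently, the log-volume of the Dikin or John ellipsoid of $W_t$) has \emph{unbounded range} even in the $R$-bounded problem: the adversary can drive $W_t$ to be arbitrarily thin in some directions while keeping other directions of width $\Theta(R)$. A body like $[-R,R]\times[-\varepsilon,\varepsilon]^{d-1}$ has $\Phi$ already shifted by $(d-1)\log(R/\varepsilon)$, and $\varepsilon$ can be taken as small as the adversary likes. Meanwhile, a single cut along the first coordinate still forces $\Theta(R)$ movement of any reasonable center. So even if your per-step inequality $\|x_{t+1}-x_t\|\le O(d)\cdot R\cdot(\Phi_t-\Phi_{t+1})$ held, telescoping yields $O(d)\cdot R\cdot(\text{total potential drop})=\infty$. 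Your claim that the potential range ``should be $O(\log d)$ on the relevant scales'' is exactly what fails: there is no lower bound on the ellipsoid volume in terms of $R$ and $d$ alone. This is precisely the obstruction the paper names in the introduction, and Bansal et al.\ give explicit instances on which the centroid and ellipsoid algorithms are not competitive at all.

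The paper's remedy---and its main new idea---is a \emph{dimensional recursion} rather than a single global potential. The algorithm maintains a subspace $S_k$ of ``skinny'' directions (those with width $\le\delta$ for $\delta=c/(d^3\log d)$) and tracks only the projected body $\Pi_{S_k^\perp}\curr^A$. This projection has all widths in $[\delta,2]$, so its volume potential has range $O(d\log d)$ and Gr\"unbaum gives a constant-factor drop per iteration, yielding $O(d\log d)$ iterations (Lemma~\ref{lem:iters}). Each iteration costs $O(1)$ to move to the projected centroid plus the cost of a recursive \Bounded call inside $x_k+S_k$, which is cheap because that slice has diameter $O(d\delta)$; the recursion in dimension closes with $f_T(d)=O(d\log d)$ (Theorem~\ref{thm:move}). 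Your ``main obstacle'' paragraph is focused on shaving a factor of $d$ in a per-step constant, but the real missing ingredient is this mechanism for \emph{projecting out} thin directions so that the volume potential has bounded range.
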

This result is almost tight unless we
make further assumptions on the norm, since there is an $\Omega(d)$
lower bound for the $\ell_\infty$-norm~\cite{FL93}.

\paragraph{Our Technique.} The high-level idea behind our algorithm is
simple: we would like to stay ``deep'' inside the feasible region so
that when this point becomes infeasible, 
the feasible
region 
shrinks
considerably. One natural candidate for such a ``deep'' point is the
\emph{centroid}, i.e., the \emph{center of mass} of the feasible
region. Indeed, it is known by a theorem of Gr\"{u}nbaum that any
hyperplane passing through the centroid of a convex body splits it into
two pieces each containing a constant fraction of the volume. At first,
this seems promising, since after $O(d)$ steps the volume would drop by
$2^{d}$---now if the feasible region is well-rounded then it would halve
the diameter and we would have made progress. The problem is that the
feasible region may have some skinny directions and other fat ones, so
that the diameter may not have shrunk even though the volume has
dropped. Indeed, Bansal et al.~\cite{Bansal} give examples showing that
this na\"{\i}ve centroid algorithm --- and a related Ellipsoid-based
algorithm --- is not competitive.  While our algorithm is also based on
the centroid, it avoids the pitfalls illustrated by these examples.

Our main idea is that if we have a very skinny dimension---say the body
started off looking like a sphere and now looks like a pancake---we have
essentially lost a dimension! The body can be thought of as lying in a space with
one fewer dimension, and we should act accordingly.  Slightly more precisely,
we restrict to the skinny directions and solve the problem in
that subspace recursively. Our cost is tiny because these directions are
skinny. Once there are no points in this subspace, we can find a
hyperplane that cuts along the fat directions (i.e., parallel
to the
skinny directions), which makes progress towards reducing the diameter.

\paragraph{The Greedy Algorithm.} 
We also bound the competitive ratio of the simplest algorithm for this
problem, namely the \emph{greedy algorithm}.  This algorithm, at time
$t$, outputs the point $x_t = \proj_{K_t}(x_{t-1})$ obtained by moving
to the closest feasible point at each step. We observe that a slightly
better result than the Bansal et al.\ result can be obtained for this
algorithm as well.

The analysis of the greedy algorithm follows from a result of Manselli and 
Pucci~\cite{ManP} on self-contracted curves. 
A rectifiable curve $\gamma: [0,a]\to \R^d$ is \emph{self-contracted} 
if for every $x\in [0,a]$ such that $\gamma$ has a tangent vector 
$t(x)$ at $x$, the sub-curve $\gamma([x,a])$ is contained in the half-
space $\{y\in \R^d: \la t(x), \gamma(y)-\gamma(x)\ra \ge 0\}$.
\begin{theorem}[Manselli and Pucci~\cite{ManP}]\label{thm:manselli}
Let $\gamma$ be a self-contracted curve in $\R^d$ and $K$ be a bounded 
convex set containing $\gamma$. 
Then the length of $\gamma$ is at most $((d-1)d^{d/2}\cdot \frac{\omega_d}{\omega_{d-1}}) 
\cdot diam(K)$, where $\omega_d$ denotes the $(d-1)$-dimensional surface 
area of the $d$-sphere.
\end{theorem}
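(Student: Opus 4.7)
The plan is to reduce to a polygonal approximation of $\gamma$, apply the Cauchy-Crofton integral formula to express the length as a spherical average of one-dimensional total variations, and bound each such total variation by an induction on the ambient dimension that exploits the self-contracted property.

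First I would replace $\gamma$ by a fine inscribed polygonal path $p_0,\ldots,p_N\in K$ whose edges $v_i:=p_{i+1}-p_i$ inherit the discrete self-contracted inequality $\la p_j-p_i,v_i\ra \ge 0$ for all $i<j$; as the mesh tends to zero the polygonal length approaches $\mathrm{length}(\gamma)$, so it suffices to bound $\sum_i \|v_i\|$. Applying the Cauchy-Crofton formula gives
\[
\sum_{i=0}^{N-1} \|v_i\| \;=\; c_d \int_{S^{d-1}} \mathrm{TV}_u \;d\sigma(u), \qquad \mathrm{TV}_u \defeq \sum_i |\la v_i, u\ra|,
\]
for a constant $c_d$ involving the ratio $\omega_d/\omega_{d-1}$, so it is enough to bound $\mathrm{TV}_u$ uniformly in $u$.

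For each direction $u$, I would group the edges into maximal runs on which $\la v_i, u\ra$ has constant sign; on each such run the scalar projection of the polygonal path onto $u$ is monotone and contributes at most $\mathrm{diam}(K)$ to $\mathrm{TV}_u$. Hence $\mathrm{TV}_u \le (M_u+1)\,\mathrm{diam}(K)$, where $M_u$ counts sign changes, and the heart of the argument becomes a dimensional bound $M_u \le h(d) = O(d^{d/2})$, proved by induction on $d$: at a sign-change index the tangent $v_i$ is (in the limit) orthogonal to $u$, and the sub-configuration of positions and tangents at sign changes forms a self-contracted structure inside the hyperplane $u^\perp \cong \R^{d-1}$, to which the hypothesis would apply. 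The hard part will be making this reduction rigorous, since the sign-change points only live asymptotically in $u^\perp$ and the reduced structure is a discrete self-contracted sequence rather than a smooth curve, so one must simultaneously develop a discrete companion of Manselli-Pucci. Tracking a $\sqrt{d}$-factor loss per recursion step---inherited from $c_d$---against the base case $d=1$ (where a self-contracted curve is monotone so $\mathrm{TV}=\mathrm{diam}(K)$) then yields the claimed constant $(d-1)\,d^{d/2}\,\omega_d/\omega_{d-1}$.
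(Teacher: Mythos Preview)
The paper does not prove Theorem~\ref{thm:manselli}; it merely quotes the result from Manselli and Pucci~\cite{ManP} and uses it as a black box in the one-paragraph proof of Theorem~\ref{thm:greedy}. So there is no ``paper's own proof'' to compare against.

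As for your sketch itself: the Cauchy--Crofton reduction to bounding $\mathrm{TV}_u$ for each direction $u$ is the right starting point and is indeed the framework behind the original argument. The step where your outline becomes a genuine gap is the induction on $d$ for the sign-change count $M_u$. You say that at a sign change the tangent is (asymptotically) orthogonal to $u$, and that the ``sub-configuration of positions and tangents at sign changes forms a self-contracted structure inside $u^\perp$''. The tangents do live near $u^\perp$, but the \emph{positions} $p_i$ do not, and projecting the positions onto $u^\perp$ does not in general preserve the discrete self-contracted inequality $\la p_j-p_i,v_i\ra\ge 0$: the $u$-components of $p_j-p_i$ and of $v_i$ can have opposite signs and be large enough to flip the sign after projection. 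You flag this as ``the hard part'', but as written it is not a difficulty to be tidied up---it is the whole content of the theorem, and the sketch gives no mechanism for overcoming it. Without an actual argument bounding $M_u$ (or, equivalently, the number of monotone runs of $\la \gamma, u\ra$), the proposal does not yet constitute a proof.
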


\begin{theorem}[Greedy]\label{thm:greedy}
  The greedy algorithm is $O(d^{(d+1)/2})$-competitive for nested \cbc with the Euclidean norm. 
\end{theorem}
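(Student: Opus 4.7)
The plan is to apply Theorem~\ref{thm:manselli} to the trajectory $\gamma$ of the greedy algorithm, viewed as the piecewise-linear curve in $\R^d$ that traces out $x_0 \to x_1 \to \cdots \to x_T$. Two ingredients are needed: (i) $\gamma$ is a self-contracted curve, and (ii) $\gamma$ lies in a convex set of diameter $O(C^*)$.

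For (i), on the open interior of the segment from $x_{t-1}$ to $x_t$ the tangent is parallel to $x_t - x_{t-1}$, and every future iterate $x_s$ with $s \geq t$ lies in $K_t$ by nesting. The defining property of the Euclidean projection $x_t = \proj_{K_t}(x_{t-1})$ gives $\la x_s - x_t,\, x_t - x_{t-1}\ra \ge 0$; adding $\la x_t - \gamma(u),\, x_t - x_{t-1}\ra \ge 0$, which holds since $\gamma(u)$ lies between $x_{t-1}$ and $x_t$, yields the half-space condition required by the self-contracted definition. The vertex points $x_t$ themselves need not be checked, as Manselli and Pucci's definition only demands the condition where $\gamma$ has a tangent.

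For (ii), set $x^* := \proj_{K_T}(x_0)$, so $\|x_0 - x^*\| = C^*$. Because $x^* \in K_T \subseteq K_t$ for all $t$ and Euclidean projection onto a convex set is $1$-Lipschitz, induction gives $\|x_t - x^*\| \le \|x_{t-1} - x^*\| \le C^*$. Hence $\gamma \subseteq B(x^*, C^*)$, a convex body of diameter $2C^*$. Combining (i) and (ii), Theorem~\ref{thm:manselli} bounds the total length of $\gamma$---which equals the greedy cost---by $2(d-1)\,d^{d/2}\cdot \frac{\omega_d}{\omega_{d-1}} \cdot C^*$. A routine Stirling estimate gives $\omega_d/\omega_{d-1} = \Theta(1/\sqrt{d})$, so the bound collapses to $O(d^{(d+1)/2})\cdot C^*$, matching the claim.

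I do not foresee a serious obstacle: once the greedy trajectory is recognized as a self-contracted curve, Theorem~\ref{thm:manselli} directly delivers the competitive ratio. The only mildly nontrivial design choice is the enclosing convex set---the small ball around the offline optimum $x^*$, rather than $K_1$ or any of the later bodies---which is what allows us to charge the length against $C^*$ (up to a factor of $2$) rather than against the diameter of the initial body.
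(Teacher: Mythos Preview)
Your proposal is correct and follows exactly the same route as the paper: identify the greedy trajectory as a self-contracted curve, enclose it in the ball $B(x^*,C^*)$ around the offline optimum, apply Theorem~\ref{thm:manselli}, and simplify $\omega_d/\omega_{d-1}=\Theta(1/\sqrt{d})$. You supply more detail than the paper (the projection inequality for self-contraction and the $1$-Lipschitz argument for the enclosing ball), but the strategy and the key design choice of the enclosing set are identical.
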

\begin{proof} 
Let $(x_t)_{t \le T}$ be the piecewise affine extension of the greedy
  algorithm's path $(x_t)_{t \in [T]}$. Essentially by definition $(x_t)$
  is a {\em self-contracted curve}. For each point in $y\in K_t$, the distance $d(y,x_t)$ 
  is decreasing. In particular, if $y$ is the optimal solution with cost $OPT = \|y\|$, 
  $(x_t)$ is contained in the ball of radius $OPT$ centered at $y$.
  Theorem~\ref{thm:manselli} implies a competitiveness of $(d-1)d^{d/2}\cdot
  \frac{\omega_d}{\omega_{d-1}}$. Now using that the
  last ratio is at most $O(1/\sqrt{d})$ gives the claim.
\end{proof}

In upcoming joint work by the second author with O.\ Angel and F.\
Nazarov, we also show that greedy's competitive ratio is $\Omega(c^d)$
for some $c>1$ (and $O(C^d)$ for some $C >1$). Thus Theorem
\ref{thm:main} gives a provably exponential improvement over the greedy algorithm.

\subsection{Other Related Work}

Motivated by problems in power management in data servers, Andrew et
al.~\cite{LinWRMA12,AndrewBLLMRW13} studied the \emph{smoothed online
  convex optimization} (SOCO) problem (not to be confused with \emph{online convex optimization} in online learning).
SOCO (also called {\em chasing convex functions}) is a special case of MTS, where the
metric space is $\R^d$ equipped with some norm, and the cost functions
are convex; hence it generalizes \cbc. An $O(1)$-competitive algorithm
is known for $d=1$~\cite{AndrewBLLMRW13, BGKPS}. Antoniadis et
al.~\cite{Anto} gave an intuitive algorithm for chasing lines and affine
subspaces, and they also showed reductions between \cbc and
SOCO. Finally the online primal-dual framework~\cite{BN-mono} can also
be viewed as chasing nested \emph{covering} constraints, i.e.,
$K_t := \{ x \in \R^d : a_s^\intercal x \geq 1 \; \forall s \leq t\}$
for $a_s \geq 0$,
with the $\ell_1$ metric (i.e., $\|\cdot\| = \|\cdot\|_1$).

\subsection{Reductions}
\label{sec:reductions}

We recall some simple reductions between \cbc and two other closely
related problems, which allow for a guess-and-double approach. This
allows us to move between the original \cbc problem and its variants
where one plays in a convex body until its diameter falls by a
constant factor.

\begin{claim} \label{claim:reduction}
For some function $f(d) \geq 1$, the following three propositions are equivalent:
  \begin{itemize} \itemsep 1pt
  \item[(i)] (General) There exists a $f(d)$-competitive algorithm for nested \cbc.
  \item[(ii)] ($r$-Bounded) For any $r \geq 0$, and assuming that
  $K_1 \sse B(x_0,r)$, there exists an algorithm for nested \cbc with total movement 
  $O(f(d)\cdot r)$. 
\item[(iii)] ($r$-Tightening) For any $r \geq 0$, and assuming that
  $K_1 \sse B(x_0,r)$, there exists an algorithm for nested \cbc that
  incurs total movement cost $O(f(d) \cdot r)$ until the first time $t$
  at which 
  $K_t$ is
  contained in some ball of radius $\frac{r}{2}$.
  \end{itemize}
\end{claim}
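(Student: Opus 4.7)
We will close the cycle $(i) \Rightarrow (ii) \Rightarrow (iii) \Rightarrow (ii) \Rightarrow (i)$; the first two implications are immediate and the last two rely on geometric doubling. For $(i) \Rightarrow (ii)$, nestedness gives $K_T \subseteq K_1 \subseteq B(x_0, r)$, so the offline optimum $C^* = \dist(x_0, K_T)$ is at most $r$, and the $f(d)$-competitive algorithm from $(i)$ pays at most $f(d) \cdot C^* \le f(d) \cdot r$. For $(ii) \Rightarrow (iii)$, the total-cost bound from $(ii)$ dominates any partial-sum cost, so the same algorithm certifies $(iii)$.

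For $(iii) \Rightarrow (ii)$, I would iterate the $(iii)$ algorithm in phases of halving radius. After phase $k-1$ I sit at some $\tilde{x}_{k-1}$ inside a body $K_{t_{k-1}}$ that is known to be contained in a ball $B(c_{k-1}, r/2^{k-1})$. In phase $k$ I would run the algorithm $A_{r/2^{k-1}}$ promised by $(iii)$, \emph{simulating} it as if it had started at $c_{k-1}$ and were processing the subsequence from $K_{t_{k-1}+1}$ onward; the simulation produces a trajectory $y_{t_{k-1}+1}, y_{t_{k-1}+2}, \ldots$ that I then follow from my actual position. The real cost exceeds the imagined cost only at the first step, and by
\[
  \|y_{t_{k-1}+1} - \tilde{x}_{k-1}\| - \|y_{t_{k-1}+1} - c_{k-1}\| \;\le\; \|c_{k-1} - \tilde{x}_{k-1}\| \;\le\; r/2^{k-1},
\]
using the triangle inequality and $\tilde{x}_{k-1} \in K_{t_{k-1}} \subseteq B(c_{k-1}, r/2^{k-1})$. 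Combined with the imagined-cost bound $O(f(d)\cdot r/2^{k-1})$ from $(iii)$, phase $k$ costs $O(f(d) \cdot r/2^{k-1})$ and terminates with a new containing ball of radius $r/2^k$; summing over $k$ gives the geometric bound $O(f(d)\cdot r)$.

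For $(ii) \Rightarrow (i)$, I would guess-and-double $OPT := \dist(x_0, K_T)$. Start with the guess $R_0 = \dist(x_0, K_1) \le OPT$ and run the $(ii)$ algorithm on the truncated sequence $\tilde{K}_t := K_t \cap B(x_0, R_0)$, which is again nested and contained in $B(x_0, R_0)$. Either the subroutine serves the whole sequence (cost $O(f(d)\cdot R_0)$), or at some time $t$ we have $\tilde{K}_t = \varnothing$, which implies $\dist(x_0, K_t) > R_0$ and hence $OPT > R_0$. I would then double $R$, pay an $O(R)$ overhead to re-initialize (again via the imagined start from $x_0$), and restart on the new truncated sequence. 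The successful guess satisfies $R_{\max} \le 2\cdot OPT$, so the geometric sum of phase costs gives total cost $O(f(d)\cdot OPT)$.

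The main subtlety is in $(iii) \Rightarrow (ii)$: the algorithm from $(iii)$ ends each phase at an arbitrary point of the body rather than at the center of the enclosing ball, so a naive re-centering would only give $K_{t_k} \subseteq B(\tilde{x}_k, r/2^{k-1})$ and hence no net shrinkage from phase to phase. The imagined-start simulation is the key move that resolves this, letting each phase reason relative to $c_k$ while charging the resulting ``centering error'' directly to the geometric sum.
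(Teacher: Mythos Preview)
Your proof is correct and follows essentially the same approach as the paper: the trivial implications $(i)\Rightarrow(ii)\Rightarrow(iii)$, then $(iii)\Rightarrow(ii)$ by iterating the Tightening algorithm on geometrically shrinking balls, and $(ii)\Rightarrow(i)$ by guess-and-double on truncated bodies $K_t\cap B(x_0,r_k)$. Your imagined-start simulation is a slightly more careful rendering of the paper's informal ``move to the center $x'$'' step (which glosses over whether $x'$ actually lies in the current body), but the underlying idea and the geometric-sum accounting are identical.
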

\begin{proof}
  The implications $(i)\Rightarrow (ii)\Rightarrow (iii)$ are clear. To
  get $(ii)\Rightarrow (i)$, we iteratively run the $r$-Bounded
  algorithm starting at $x_0$ and doubling $r$ in each
  run. Specifically, let $r_0 := \dist(x_0,K_1)$ be the distance from
  the starting point to the first convex body $K_1$; this will be our
  initial guess $r_0$ for the optimal cost. (Without loss of generality,
  we can assume that $x_0 \not\in K_1$ and hence $r_0 \neq 0$, else we
  can drop $K_1$ from the sequence.) In the $k^{th}$ run, we execute the
  $r$-Bounded problem algorithm with parameter $r_k:=2^k r_0$ on the
  \emph{truncated} sets $B(x_0,r_k)\cap K_t$. If
  $B(x_0,r_k)\cap K_t = \emptyset$ for some $t$, then we know that
  $\dist(x_0, K_t) \geq r_k$, and hence the optimal cost is strictly
  more than $r_k$. In that case we move back to $x_0$, and begin the
  $k+1^{st}$ run of our algorithm (i.e., start playing the
  $r_{k+1}$-Bounded problem on the current truncated set
  $K_t \cap B(x_0, r_{k+1})$).
    
  By our assumption, the algorithm's cost for run $k$ 
  is $O(f(d) \cdot r_k)$, plus $r_k$ for the final
  cost of moving back to $x_0$ at the end of the iteration.  If the
  algorithm requires $T$ iterations, the optimal cost is at least
  $r_T/2$ (because the $(T-1)^{st}$ run ended) whereas our total cost is at
  most
  $$\sum_{k=1}^T O(f(d)+1))\, r_k = \sum_{k=1}^T O(f(d)+1))\; 2^{k-T} r_T 
  < 2 \cdot O(f(d)+1)\; r_T,$$ which implies a competitive ratio of $O(f(d))$.
  (This reduction $(ii)\Rightarrow (i)$ also appears
  as~\cite[Lemma~6]{Bansal}.)

  Finally, to show $(iii)\Rightarrow (ii)$, we first run the $r$-Tightening
  algorithm until 
  $K_t$ is
  contained in some ball $B(x', r/2)$ of radius $\nicefrac{r}{2}$, 
  then move to the center
  $x'$ of the smaller ball and use the $r/2$-Tightening algorithm on
  that ball, and so on. 
  The new
  center $x'$ is at distance $O(r)$ from the old center $x$, and hence
  the cost for the first ball is $O(f(d) + 1) \cdot r$. Now the cost is
  reduced by a factor of $\frac{1}{2}$ for each successive iteration,
  thus the total cost is at most $2 \cdot O(f(d)+1)\,r = O(f(d))\,r$.
\end{proof}

\section{Preliminaries}
We gather here notation and classical convex geometry results that will be useful in our analysis.

\subsection{Notation}
\label{sec:notation}

Given a convex body $K \sse \R^d$, its \emph{centroid} (also called its center
of mass/gravity) is 
\[ \mu(K) := \frac{1}{\Vol(K)} \int_{x \in K} x\, dx = \E_{X \sim K}[X]. \] Given a unit vector
$v \in \R^d$, the \emph{directional width} of a set $K$ in the direction $v$ is
\begin{gather}
  w(K, v) = \max_{x, y \in K} v^\intercal (x-y). \label{eq:3}
\end{gather} 
We denote $\delta(K)$ for the minimum directional width of $K$ over all unit vectors $v$. 
We define $\Pi_{L}X$ to be the projection of the set X on the subspace $L$, that is
$$\Pi_{L}X \defeq \{x+y \in L : x\in X, y \in L^\perp\}.$$
In the following we fix a norm $\|\cdot\|$ in $\R^d$, and use $B(x,
r) :=\{y \in \R^d : \|y-x\| \leq r\}$ to denote  a ball of radius $r \geq 0$ centered at $x \in \R^d$. Furthermore we will assume that the norm satisfies for all $x \in \R^d$:
\begin{equation} \label{eq:normass}
\|x\|_2 \leq \|x\| \leq \sqrt{d} \|x\|_2 ~.
\end{equation}
Indeed, given a full-dimension convex body $K\sse \R^d$ which is
symmetric about the origin, John's theorem guarantees the existence of
an ellipsoid $\mathcal{E}$ such that $\mathcal{E}\sse K \sse
\sqrt{d}\mathcal{E}$ (see, e.g.~\cite{Ball92}). Take $K$ to be the unit $\|\cdot \|$ ball, and by applying a linear transformation we may assume that $\mathcal{E}$ is the unit Euclidean ball. We see that \eqref{eq:normass} holds true, so we make this assumption without loss of generality. 

\subsection{Convex geometry reminders}
\label{sec:conv-geom}

We use the following theorems from convex geometry in our analysis. Let
$K$ denote a general convex body $K \subset \R^d$. Some definitions used
here were given in Section~\ref{sec:notation}.

\begin{theorem}[Gr\"unbaum's Theorem~\cite{Grunbaum}]
  \label{thm:Grunbaum} For any half-space $H$ containing the centroid of
  $K$ one has
  \[
    \Vol(K\cap H)\geq\frac{1}{e}\Vol(K).
  \]
\end{theorem}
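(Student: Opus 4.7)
The plan is to establish the theorem by slicing, reducing the $d$-dimensional statement to a one-dimensional problem about $(d-1)$-concave densities. After translating so that $\mu(K) = 0$ and rotating so that the half-space is $H = \{x \in \R^d : x_1 \geq 0\}$, define the cross-section area function $f(t) = \Vol_{d-1}(K \cap \{x_1 = t\})$, supported on some interval $[-a, b]$ with $a, b > 0$. The desired inequality becomes
\[
\frac{\int_0^b f(t)\,dt}{\int_{-a}^b f(t)\,dt} \;\geq\; \frac{1}{e},
\]
subject to the centroid constraint $\int_{-a}^b t\, f(t)\,dt = 0$.

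The key structural input is the Brunn--Minkowski inequality, which yields that $g := f^{1/(d-1)}$ is concave on $[-a, b]$. Using this concavity, the next step is to reduce to the extremal ``cone body'': the shape for which $g$ is affine on $[-a, b]$, which geometrically is a truncated solid cone. Since both $H$ and its complement contain the centroid, we may assume that the apex lies on the $H$-side, i.e.\ $f(t) = c(b - t)^{d-1}$ with apex at $(b, 0, \dots, 0)$. The reduction is a symmetrization/rearrangement argument: any $(d-1)$-concave density on $[-a, b]$ can be replaced by an affine-root density of the same support that concentrates more mass on the negative side, decreasing the ratio while preserving the centroid.

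For the extremal cone, the computation is explicit. The centroid identity $\int_{-a}^b t(b - t)^{d-1}\,dt = 0$, after substituting $u = b - t$, reduces to the clean relation $b = d\,a$, so
\[
\frac{\int_0^b (b-t)^{d-1}\,dt}{\int_{-a}^b (b-t)^{d-1}\,dt} = \left(\frac{b}{a+b}\right)^{d} = \left(\frac{d}{d+1}\right)^{d}.
\]
The elementary inequality $\bigl(1 - \tfrac{1}{d+1}\bigr)^{d} \geq e^{-1}$---which follows, for instance, from $\ln(1-x) \geq -x/(1-x)$ applied at $x = 1/(d+1)$---then gives the bound $1/e$ claimed in the theorem.

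The main obstacle is the reduction to a cone: one must show that among all $(d-1)$-concave $f$ on $[-a, b]$ with $\int t\, f(t)\,dt = 0$, the ratio $\int_0^b f / \int_{-a}^b f$ is minimized by the cone body with apex on the positive side. A clean route is a variational argument showing that at a minimizer $g$ must be affine (otherwise a small perturbation that preserves concavity and the centroid constraint strictly decreases the ratio), reducing the problem to checking cones. Everything else is a direct one-variable calculus computation.
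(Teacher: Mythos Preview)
The paper does not prove Gr\"unbaum's theorem; it merely states it with a citation to~\cite{Grunbaum} and uses it as a black box in Lemma~\ref{lem:iters}. So there is no ``paper's own proof'' to compare against.

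Your outline is the classical proof and is essentially correct: reduce to a one-dimensional question about the slice function $f(t)=\Vol_{d-1}(K\cap\{x_1=t\})$, invoke Brunn--Minkowski to get concavity of $f^{1/(d-1)}$, reduce to the cone, and compute. Your cone computation is right (the centroid identity indeed gives $b=da$, hence the ratio $(d/(d+1))^d\geq e^{-1}$). The only soft spot is exactly the one you flag: the reduction to the cone. The variational sketch you give (``at a minimizer $g$ must be affine'') is plausible but not a proof as stated; a minimizer need not exist in the class you describe without a compactness argument, and even then you must rule out minimizers with $g$ affine only on pieces. The standard way to make this step rigorous is a direct comparison rather than a variational argument: take the supporting line $\ell$ to $g$ at $t=0$, build the cone whose profile is $\ell$ (so it agrees with $K$ in the cross-section at $0$), and check via concavity that (i) the cone has at least as much mass on the negative side as $K$ does, while (ii) its centroid is no further left than $0$. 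This pins down the extremal cone without any optimization over an infinite-dimensional family. With that replacement, your argument is complete.
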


\begin{theorem}[{\cite[Lemma~6.1]{LLV}}]
  \label{thm:projected_vol}
  Recall that $\delta(K)$ is the minimum directional width of $K$ over
  all directions.  For any subspace $L \subset \R^d$ one has
  \[
    \Vol(\Pi_{L}K)\leq\left(\frac{d(d+1)}{\delta(K)}\right)^{d-\dim L}\cdot\Vol(K).
  \]
\end{theorem}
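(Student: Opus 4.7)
The plan is to prove the inequality by reducing codimension one unit at a time via a single fiber-integration lemma, yielding in fact the slightly tighter constant $d!/(\dim L)!$ in place of $(d(d+1))^{d-\dim L}$ (which implies the stated bound).

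\textbf{Key single-step lemma.} For any convex body $K' \subset \R^{d'}$ and any hyperplane $H \subset \R^{d'}$ through the origin with unit normal $n$,
\[
\Vol(K') \;\ge\; \frac{w(K', n)}{d'}\, \Vol(\Pi_H K').
\]
To prove this I would set coordinates so $H = \{x_{d'}=0\}$ and $n = e_{d'}$. Fubini then gives $\Vol(K') = \int_{\Pi_H K'} \ell(x)\, dx$, where $\ell(x)$ is the length of the fiber $K' \cap (x + \R e_{d'})$. Convexity of $K'$ forces the lower and upper envelopes of these fibers to be convex and concave in $x$, respectively, so $\ell$ is concave on the convex set $\Pi_H K'$, with $\max \ell = w(K', n)$. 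A cone bound then finishes: along every ray from the argmax $x_0$ of $\ell$ to a boundary point $p \in \partial \Pi_H K'$, concavity together with $\ell(x_0) = \max\ell$ and $\ell(p) \ge 0$ forces $\ell$ to dominate the linear interpolation from $(x_0, \max\ell)$ to $(p, 0)$. Integrating, $\int \ell \ge \tfrac{1}{d'} \max\ell \cdot \Vol(\Pi_H K')$, since the right-hand side is precisely the volume of a $d'$-dimensional cone over $\Pi_H K'$ with apex at height $\max \ell$.

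\textbf{Iteration.} Let $k := \dim L$ and fix an orthonormal basis $v_{k+1}, \ldots, v_d$ of $L^\perp$. Define $L^{(j)} := L \oplus \Span(v_{k+1}, \ldots, v_{k+j})$ for $j = 0, 1, \ldots, d-k$, so that $\dim L^{(j)} = k+j$, $L^{(0)} = L$, and $L^{(d-k)} = \R^d$. Set $M^{(j)} := \Pi_{L^{(j)}} K$, viewed as a convex body in the $(k+j)$-dimensional subspace $L^{(j)}$. Apply the single-step lemma inside $L^{(j)}$ with hyperplane $L^{(j-1)}$ (unit normal $v_{k+j} \in L^{(j)}$). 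Two clean observations make it plug in: orthogonal projection preserves widths along directions inside the target subspace, so $w(M^{(j)}, v_{k+j}) = w(K, v_{k+j}) \ge \delta(K)$; and projections compose, so $\Pi_{L^{(j-1)}} M^{(j)} = \Pi_{L^{(j-1)}} K = M^{(j-1)}$. The result is
\[
\Vol(M^{(j)}) \;\ge\; \frac{\delta(K)}{k+j}\, \Vol(M^{(j-1)}).
\]
Telescoping $j = d-k, d-k-1, \ldots, 1$ then gives $\Vol(K) \ge \frac{k!}{d!}\, \delta(K)^{d-k}\, \Vol(\Pi_L K)$, which rearranges to the claim.

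\textbf{Main obstacle.} The fussiest step is the cone bound in the single-step lemma: one must verify the ray-by-ray concavity comparison cleanly, and identify $\int \ell$ with the $d'$-dimensional volume of the region under the graph over $\Pi_H K'$ while identifying the right-hand side with the volume of the $d'$-dimensional cone over the same base with apex at height $\max\ell$. Everything else is routine dimension bookkeeping, though one does have to be careful that $v_{k+j}$ lives inside $L^{(j)}$ (rather than its orthogonal complement) to justify the width identity in the iteration step.
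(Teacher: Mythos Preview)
The paper does not prove this theorem; it is quoted from \cite[Lemma~6.1]{LLV} without proof. So I evaluate your argument on its own merits.

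Your single-step lemma is false as stated. The error is the assertion $\max \ell = w(K', n)$: the maximum chord length of $K'$ in direction $n$ can be strictly smaller than the directional width $w(K', n)$, since the maximizer and the minimizer of $n^\top y$ over $y \in K'$ need not project to the same point of $H$. A concrete counterexample in $\R^2$ is the sheared parallelogram with vertices $(0,0),(1,0),(3,1),(2,1)$ and $n = e_2$: here $\Vol(K') = 1$, $\Vol(\Pi_H K') = 3$, and $w(K',n) = 1$, so your lemma would assert $1 \ge \tfrac{1}{2}\cdot 3$, which fails. (Indeed $\max \ell = \tfrac12$ here; the cone bound $\Vol(K') \ge \tfrac{\max \ell}{d'}\,\Vol(\Pi_H K')$ is correct, but only with $\max\ell$, not with $w(K',n)$.)

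The fix is to replace $w(K',n)$ by a quantity you can actually lower-bound in terms of $\delta$. Since $K'$ contains a Euclidean ball of radius $\delta(K')/d'$ (this is precisely Lemma~\ref{lem:width_after_cut}), the chord through its center in direction $n$ has length at least $2\delta(K')/d'$, giving $\max \ell \ge 2\delta(K')/d'$ and hence $\Vol(\Pi_H K') \le \tfrac{(d')^2}{2\,\delta(K')}\,\Vol(K')$. In the iteration you then need $\delta(M^{(j)}) \ge \delta(K)$, which holds because orthogonal projection onto a subspace preserves every directional width along directions in that subspace; telescoping now yields the stated bound $\bigl(d(d+1)/\delta(K)\bigr)^{d-\dim L}$ (in fact slightly better). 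What does not survive the repair is the sharper factor $d!/(\dim L)!$ you claimed --- that was a casualty of the incorrect identity $\max \ell = w(K',n)$.
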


\begin{theorem}[{\cite[Theorem~4.1]{kannan1995isoperimetric}}]
  \label{thm:sandwish} Let $\mu$ denote the centroid of $K$. Let
  $\Sigma \defeq\E_{x\sim K}(x-\mu)(x-\mu)^{T}$ be the covariance matrix of
  $K$, and let
  $\mathcal{E} \defeq \{x \in \R^d : x^{\top} \Sigma^{-1} x\leq1\}$ be the
  ellipsoid defined by $\Sigma$.  Then
  one has
  \[ \mu+\sqrt{\frac{d+1}{d}}\mathcal{E}\subset
  K\subset\mu+\sqrt{d(d+1)}\mathcal{E}. \]
\end{theorem}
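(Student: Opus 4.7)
The plan is to reduce to the isotropic case and then reduce further to a one-dimensional extremal problem via Brunn-Minkowski.

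First I apply the affine transformation $y \mapsto \Sigma^{-1/2}(y - \mu)$ so that without loss of generality $\mu = 0$ and $\Sigma = I$; in this normalization the ellipsoid $\mathcal{E}$ coincides with the unit Euclidean ball $B$, and the target containment becomes $\sqrt{(d+1)/d}\, B \subset K \subset \sqrt{d(d+1)}\, B$. Both inclusions can be verified direction by direction: for any unit vector $u$, we must bound the support function $h_K(u) = \max_{x \in K} u^\intercal x$ from above (for the outer containment) and from below (for the inner containment, applied to both $u$ and $-u$).

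Fix such a $u$, and let $g(t) \defeq \vol_{d-1}(K \cap \{x : u^\intercal x = t\})$. The marginal density of $u^\intercal X$ when $X$ is uniform on $K$ is $f(t) = g(t)/\vol(K)$, supported on an interval $[a, b]$ with $b = h_K(u)$. Isotropy gives the three moment conditions $\int f = 1$, $\int t\, f(t)\, dt = 0$, and $\int t^2 f(t)\, dt = 1$. Brunn's theorem---the Brunn-Minkowski inequality applied to parallel sections---tells us further that $g^{1/(d-1)}$, and hence $f^{1/(d-1)}$, is concave on $[a, b]$. Both containments now reduce to a one-dimensional question: given a density on $[a, b]$ with $f^{1/(d-1)}$ concave and satisfying the stated moments, what is the possible range of $b$?

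I would solve this extremal problem by showing that the extremizers are ``one-sided cone'' densities of the form $f(t) \propto (b - t)_+^{d-1}$ (which maximizes $b$ by pushing mass toward $a$) or $f(t) \propto (t - a)_+^{d-1}$ (which minimizes $b$ by concentrating mass near the right endpoint). For each such cone density, the mean-zero condition pins down $a$ in terms of $b$, and the variance-one condition then pins down $b$ as an explicit function of $d$, yielding the stated numerical bounds. The main obstacle is the variational step---rigorously identifying the extremizers---which I would handle by a compactness argument on admissible densities followed by a local-perturbation check: any deviation from the cone form at a supposed optimum can be locally exchanged to strictly improve $b$ while preserving the moment constraints, giving a contradiction.
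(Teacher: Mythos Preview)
The paper does not prove this theorem; it is quoted from Kannan--Lov\'asz--Simonovits and used only as a black box in Lemma~\ref{lem:width_after_cut}. Your reduction to a one-dimensional extremal problem via Brunn's theorem is the standard route, and essentially the one taken in the cited source (there the extremization step is handled by the localization lemma rather than a compactness-plus-perturbation argument, but for this particular inequality the two are interchangeable).

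One point to flag: when you actually carry out the computation for your cone densities $f(t)\propto(b-t)^{d-1}$ with mean $0$ and variance $1$, you will get $b=\sqrt{d(d+2)}$ and $|a|=\sqrt{(d+2)/d}$, not the constants $\sqrt{d(d+1)}$ and $\sqrt{(d+1)/d}$ printed in the statement. The discrepancy is genuine rather than an error in your outline: already for $d=1$ the only isotropic body is the interval $[-\sqrt{3},\sqrt{3}]$, which is not contained in $\sqrt{d(d+1)}\,B=[-\sqrt{2},\sqrt{2}]$, and in general an isotropic cone has its apex at distance exactly $\sqrt{d(d+2)}$ from the centroid. So the outer inclusion as written is slightly too strong and cannot be proved; your argument yields the correct sharp version with $d+2$ in place of $d+1$. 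This has no effect on the paper, since Lemma~\ref{lem:width_after_cut} only uses the two inclusions up to constant factors.
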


\begin{lemma}
  \label{lem:width_after_cut}
  Any convex body $K \sse \R^d$ contains a Euclidean ball of radius
  $\delta(K)/d$. Furthermore, for any halfspace $H$ containing the
  centroid $\mu$ on its boundary (i.e.,
  $H=\{x \in \R^d :v^{\top}(x-\mu)\leq0\}$ for some $v$), one has
  \[ \delta(H \cap K) \geq \delta(K) / (2d). \]
\end{lemma}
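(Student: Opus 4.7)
The plan is to derive both statements from the sandwich theorem (Theorem~\ref{thm:sandwish}), which places $K$ between $\mu + \sqrt{(d+1)/d}\,\mathcal{E}$ and $\mu + \sqrt{d(d+1)}\,\mathcal{E}$, where $\mathcal{E} = \{x : x^\top \Sigma^{-1} x \leq 1\}$ is the covariance ellipsoid of $K$. The single quantitative link needed is a lower bound on $\sqrt{\lambda_{\min}(\Sigma)}$ in terms of $\delta(K)$: since the width of the outer ellipsoid in any unit direction $u$ is $2\sqrt{d(d+1)}\sqrt{u^\top \Sigma u}$, and this upper bounds $w(K,u)$, minimizing over $u$ gives
\[
    \sqrt{\lambda_{\min}(\Sigma)} \;\geq\; \frac{\delta(K)}{2\sqrt{d(d+1)}}.
\]
This inequality drives both parts.

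For part one, the inner ellipsoid $\mu + \sqrt{(d+1)/d}\,\mathcal{E} \subseteq K$ contains a Euclidean ball centered at $\mu$ whose radius equals its smallest semi-axis, namely $\sqrt{(d+1)/d}\sqrt{\lambda_{\min}(\Sigma)}$. Combining with the eigenvalue bound above produces an inscribed Euclidean ball of radius at least $\delta(K)/(2d)$, which is the claim up to the constant.

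For part two, I would first observe that $H \cap K$ contains the half-ellipsoid $\mu + (\sqrt{(d+1)/d}\,\mathcal{E}) \cap H'$, where $H' := H - \mu$ is a halfspace whose boundary passes through the origin. The crux is then the following geometric claim: \emph{for any ellipsoid $E$ centered at the origin and any halfspace $H'$ whose boundary passes through the origin, the width of $E \cap H'$ in every unit direction $u$ is at least half the width of $E$ in direction $u$.} To verify it, let $x^* \in E$ realize $\max_{x \in E} u^\top x = A$; by central symmetry $w(E,u) = 2A$, and at least one of $\pm x^*$ lies in the closed halfspace $H'$. If $x^* \in H'$, then $\max_{E \cap H'} u^\top x = A$ while $\min_{E \cap H'} u^\top x \leq u^\top 0 = 0$; the other case is symmetric (use $-x^*$ to witness the minimum, and $0$ to witness the maximum). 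Either way the width is at least $A$. Applying this to the inner ellipsoid, minimizing over $u$, and using the eigenvalue bound yields
\[
  \delta(H \cap K) \;\geq\; \sqrt{(d+1)/d}\sqrt{\lambda_{\min}(\Sigma)} \;\geq\; \frac{\delta(K)}{2d},
\]
which is exactly the stated bound.

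The main obstacle I anticipate is the half-ellipsoid claim: naively one worries that a direction $u$ nearly parallel to the hyperplane normal of $H'$ could lose much more than a factor of $2$ in width, but the central symmetry of the ellipsoid rescues us because the extremal point in any direction has its antipode in the chosen half, and the origin (which lies on the cut) anchors the opposite end of the width interval.
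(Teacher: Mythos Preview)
Your proof is correct and follows essentially the same approach as the paper: both derive the eigenvalue bound $\sqrt{\lambda_{\min}(\Sigma)} \geq \delta(K)/(2\sqrt{d(d+1)})$ from the outer ellipsoid of Theorem~\ref{thm:sandwish} and then use the inner ellipsoid to produce an inscribed ball (your honest constant $\delta(K)/(2d)$ is what the argument actually yields; the paper's stated $\delta(K)/d$ drops a factor of $2$). For the second part the paper argues that the inscribed \emph{ball} centered at $\mu$ yields a half-ball in $H\cap K$, which in turn contains a ball of half the radius; your half-ellipsoid width argument is a slight generalization of the same idea and gives the same bound.
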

\begin{proof}
  Let $\mathcal{E}$ be the ellipsoid defined in Theorem
  \ref{thm:sandwish}. Since $K\subset\mu+\sqrt{d(d+1)}\mathcal{E}$ we
  have that the minimum width of the scaled ellipsoid
  $\sqrt{d(d+1)}\mathcal{E}$ is larger than $\delta(K)$, which in turn
  implies that $\mathcal{E}$ contains a Euclidean ball of radius
  $\frac{\delta(K)}{\sqrt{d(d+1)}}$ centered at $\mu$. Thus using that
  $\mu+\sqrt{\frac{d+1}{d}}\mathcal{E}\subset K$ we get that $K$
  contains a Euclidean ball of radius $\delta(K)/d$ centered at
  $\mu$. The second statement follows since a half-ball of radius $r$
  contains a ball of radius $r/2$.
\end{proof}


\section{A Centroid-Based Algorithm}
\label{sec:centroid}

We present a centroid-based algorithm and the analysis that it is
$f(d) = O(d\log d)$ competitive. By the reductions in
Claim~\ref{claim:reduction} and by scaling, it suffices to give an
algorithm for the $1$-Tightening version problem, which starts off with
the convex body being contained in a unit ball, terminates with the
final body lying in a ball of radius $1/2$, and pays at most $f(d)$.

\subsection{Overview of the \Bounded and \Tighten Algorithms}
To simplify notation we ignore the time indexing $t$, and we denote $\curr$ for the 
{\em current} requested convex body. Thus at the start of the algorithm one has $\curr=K_1$,
and every time the algorithm moves to a new point $x$ in $\curr$ the adversary updates 
$\curr$ to the next convex body in the input sequence that does not contain $x$.

Both the \Bounded and \Tighten algorithms take as input an affine subspace $A\sse \R^d$. 
They each output points $x_t\in \curr\cap A$ until their respective end conditions are met. The two algorithms only differ essentially in their end conditions:
\Bounded terminates when $\curr\cap A$ is empty while \Tighten terminates when $\curr\cap A$ is ``skinny" in every direction.

In the next section we will define the \Tighten algorithm. Recall that using the reductions of Section~\ref{sec:reductions}, we then also get an algorithm for the \Bounded version of the problem. The algorithm \Tighten makes calls to \Bounded only in lower-dimensional subspaces so there is no circular reference.

\subsection{The \Tighten Algorithm}
Due to the recursive nature of our algorithm we will consider solving the \Tighten problem
in an affine subspace $A \subset \R^d$. More precisely, while $\curr$ is a
convex body in $\R^d$, we are only interested in its ``shadow'' $\currA := \curr \cap A$.
Given the guarantee that at the start $\currA$ is contained in a unit ball (in $A$), the 
goal is to output points within $\currA$ until $\currA$ lies inside some
ball of radius at most $1/2$.

\bigskip
\hrule width \hsize height 2pt \kern 1mm
\hrule width \hsize 
\smallskip

\textbf{Algorithm} $\Tighten(A)$

\medskip
\hrule

\begin{enumerate}
\item Let $d_A$ be the dimension of $A$; if $d_A \leq 1$ run the greedy algorithm.
\item Let $k=0$ and $\delta = \frac{c}{d_A^{3}\log d_A}$ for some small enough
constant $c$.
\item Let $S_{0} \sse A$ be a subspace of ``skinny'' directions obtained
  by choosing a maximal set of orthogonal directions $v$ such that the
  directional width is smaller than $\delta$, i.e., $w(\currA, v) \leq \delta$.
  
\item While $S_{k}\neq A$ 
\begin{enumerate}
\item 
  Let $c_k$ be the centroid of $\Pi_{S_{k}^{\perp}} \currA$. Move to any point $x_k\in \currA$ such that $\Pi_{S_k^\perp} x_k = c_k$. 
\item Call the procedure $\Bounded(x_k+S_k)$.
\item Let $S \gets S_k$. While there is a direction $v\perp S$ such that $w(\currA, v) \leq \delta$, $S \gets \text{span}(v,S)$.
\item $S_{k+1} \gets S$, and $k \gets k+1$.
\end{enumerate}
\end{enumerate}

\medskip
\hrule width \hsize \kern 1mm
\hrule width \hsize height 2pt 
\bigskip

Observe that when $\Tighten(A)$ stops, the body $\currA$ has directional width at most $\delta$ for any direction in some basis of $A$ (since $S_k = A$), and thus it is contained in a Euclidean ball of radius $O(\sqrt{d} \delta)$.
This means that the diameter in the norm $\|\cdot\|$ is at most $O(d \delta) \leq 1/2$ (recall~\eqref{eq:normass}).
In particular this is a valid stopping time for the $\Tighten$ problem. Thus we only have to analyze the movement cost of $\Tighten(A)$.

\paragraph{An Example.}
Suppose $A = \R^3$ at the beginning of some iteration $k$,
$\currA := \{(x,y,z) \mid x^2 + y^2 \leq 1, z \in [-\delta,\delta]\}$ is
a unit-radius pancake with height $\delta$ centered at the origin, with
the major axes in the $x$-$y$ plane and the short dimension along the
$z$-axis. The subspace $S_k \sse A$ is such that $\currA$ is skinny
along directions in this subspace, in this case suppose $S_k$ is the
$z$-direction.  The algorithm takes the projection of $\currA$ onto the
non-skinny directions (the $x$-$y$ plane), finds its centroid $c_k$ and
chooses $x_k \in \currA$ that projects to this centroid $c_k$. In this
case the centroid of the projection is the origin, and then $x_k$ is any
point in the pancake with $x=y=0$.

The algorithm then recurses with $A$
being the $z$-axis, i.e., the subspace $x=y=0$. When this recursive call
terminates, $\currA$ has an empty intersection with this affine subspace
(i.e., with the $z$-axis). This means there exists a hyperplane that
separates the $z$-axis from the new $\currA$---in particular, the new $\currA$
lies within some ``half-pancake''. This operation not only reduces $\currA$'s
volume, but also makes a substantial reduction in its width along some direction in
the $x$-$y$ plane.

\subsection{Cost Analysis}
Each iteration of $\Tighten(A)$ induces a movement of at most $1$ when
we move to the centroid (recall that by assumption $\currA$ is of diameter at most $1$)
plus the movement in the recursive call. The latter movement is tiny, since in the recursion the body
has a small diameter ($\leq \delta$). So the main part of the analysis
is to bound the number of iterations; the bound on the total movement is
then proved in Theorem~\ref{thm:move}.

In the following we denote $X_k$ for the value of $\currA$ at the beginning of iteration $k$.

\begin{lemma}
  \label{lem:iters}
  The procedure \Tighten terminates in at most $O(d \log d)$ iterations.
\end{lemma}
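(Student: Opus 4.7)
My plan is to track the potential $\Phi_k := \log \Vol(\Pi_{S_k^\perp} X_k) + n_k \log(d_A/\delta)$, where $d_A = \dim A$ and $n_k := d_A - \dim S_k$, and to show that it decreases by $\Omega(1)$ per iteration on average while ranging over an interval of length $O(d_A \log d_A)$. First I would establish the range of $\Phi_k$: at the start of each iteration, the maximality enforced in the update step forces $\delta(\Pi_{S_k^\perp} X_k) > \delta$, so by Lemma~\ref{lem:width_after_cut} the projection contains a Euclidean ball of radius $\delta/d_A$, giving $\Vol(\Pi_{S_k^\perp} X_k) \geq V_{n_k}(\delta/d_A)^{n_k}$; and since $X_k \sse B(x_0,1)$, also $\Vol(\Pi_{S_k^\perp} X_k) \leq V_{n_k}$ (where $V_n$ denotes the volume of the unit Euclidean ball in $\R^n$). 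Combined with Stirling and the choice $\delta = \Theta(1/(d_A^3 \log d_A))$, these bounds pin $\Phi_k$ into an interval of length $O(d_A \log d_A)$.

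The per-iteration change is where the real work lies, and I would handle it by composing Gr\"unbaum with the projected-volume lemma. Because \Bounded terminates with $X_{k+1} \cap (x_k + S_k) = \emptyset$ and $c_k = \Pi_{S_k^\perp} x_k$ is the centroid of $\Pi_{S_k^\perp} X_k$, the set $\Pi_{S_k^\perp} X_{k+1}$ does not contain $c_k$, so a hyperplane through $c_k$ in $S_k^\perp$ separates them; lifting this yields a halfspace $H \sse A$ parallel to $S_k$ with $X_{k+1} \sse X_k \cap H$. Gr\"unbaum (Theorem~\ref{thm:Grunbaum}) gives $\Vol(\Pi_{S_k^\perp}(X_k \cap H)) \leq (1-1/e)\Vol(\Pi_{S_k^\perp} X_k)$, and Lemma~\ref{lem:width_after_cut} gives the crucial lower bound $\delta(\Pi_{S_k^\perp}(X_k \cap H)) \geq \delta/(2d_A)$. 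Applying Theorem~\ref{thm:projected_vol} inside $S_k^\perp$ to project this cut body further onto the smaller subspace $S_{k+1}^\perp$ then yields
\[ \Vol(\Pi_{S_{k+1}^\perp} X_{k+1}) \;\leq\; \bigl(O(d_A^3/\delta)\bigr)^{n_k - n_{k+1}} (1-1/e)\,\Vol(\Pi_{S_k^\perp} X_k). \]
Taking logarithms and folding in the $(n_{k+1}-n_k)\log(d_A/\delta)$ change in the correction term of $\Phi$ gives the key per-step inequality
\[ \Phi_k - \Phi_{k+1} \;\geq\; \log\tfrac{e}{e-1} \;-\; (n_k - n_{k+1}) \log\bigl(O(d_A^2)\bigr). \]
Summing over $k = 0, \ldots, T-1$ telescopes the dimension term into $(n_0 - n_T) \log(O(d_A^2)) \leq O(d_A \log d_A)$, and combining with $\Phi_0 - \Phi_T = O(d_A \log d_A)$ yields $T = O(d_A \log d_A)$.

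The main obstacle is precisely establishing the per-iteration inequality on steps that also grow $S$. A naive attempt to apply Theorem~\ref{thm:projected_vol} directly to $\Pi_{S_k^\perp} X_{k+1}$ is useless, because this body may have arbitrarily small minimum directional width (the tiny widths are exactly what drove new directions into $S_{k+1}$). The remedy is to project the \emph{cut} body $\Pi_{S_k^\perp}(X_k \cap H)$ instead: Lemma~\ref{lem:width_after_cut} still guarantees its minimum width is $\gtrsim \delta/d_A$, so the projection blow-up per dropped dimension stays at $O(d_A^3/\delta)$, whose logarithm exceeds $\log(d_A/\delta)$ by only $O(\log d_A)$. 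This controlled blow-up is what lets the Gr\"unbaum savings $\log(e/(e-1))$ survive across dimension drops and accumulate into the desired $O(d_A \log d_A)$ iteration bound.
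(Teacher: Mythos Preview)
Your proof is correct and follows essentially the same approach as the paper: both track the projected volume $\Vol(\Pi_{S_k^\perp} X_k)$, combine the separation argument with Gr\"unbaum (Theorem~\ref{thm:Grunbaum}) for the $(1-1/e)$ drop, use Lemma~\ref{lem:width_after_cut} to guarantee $\delta(\Pi_{S_k^\perp} X_k \cap H) \geq \delta/(2d_A)$, and then apply Theorem~\ref{thm:projected_vol} to the \emph{cut} body (exactly the subtlety you flag) to control the blow-up when passing to $S_{k+1}^\perp$. The only difference is cosmetic: you fold the dimension-drop blow-up into the potential via the additive term $n_k \log(d_A/\delta)$, whereas the paper leaves the potential as the raw projected volume and telescopes the blow-up into a single factor $(O(d^3/\delta))^{\dim S_T}$ at the end.
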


\begin{proof}
    We bound the number of iterations of
  the algorithm via the potential
  \[ \Vol(\Pi_{S_{k}^{\perp}}X_{k}). \] At a high level, the proof shows that
  the hyperplane cuts cause this projected volume to decrease rapidly
  (since we are making cuts along the non-skinny directions). And when
  $S_k$ grows, we show that the projected volume does not increase too
  much.

  The key observation is that
  $\Pi_{S_{k}^{\perp}}X_{k+1}$ is contained in the intersection of
  $\Pi_{S_{k}^{\perp}}X_{k}$ with a halfspace $H_k$ passing through its centroid $c_k$.
  Indeed after the recursive call to $\Bounded$ in iteration $k$, one
  has that $\currA \cap (x_k + S_k) = \emptyset$; recall that
    $\currA=X_{k+1}$ at that time.
  But $c_k \not\in \Pi_{S_k^\perp} X_{k+1}$ implies that
  $\Pi_{S_{k}^{\perp}}X_{k+1}$ and $c_k$ can be separated by some hyperplane.   
  
  Let $Y_k$ denote the intersection $\Pi_{S_{k}^{\perp}}X_{k}\cap H_k$.
  By Gr\"unbaum's theorem
  (Theorem~\ref{thm:Grunbaum}),
  \begin{equation}
    \Vol(Y_k) \leq\left(1-\frac{1}{e}\right)\Vol(\Pi_{S_{k}^{\perp}}X_{k}).\label{eq:vol_1}
  \end{equation}

  Now the construction of $S_k$ ensures that there are no directions
  orthogonal to $S_k$ that are skinny. Hence the minimum width of
  $\Pi_{S_{k}^{\perp}}X_{k}$ is at least $\delta$. By
  Lemma~\ref{lem:width_after_cut}, the minimum width of
  $Y_k$ is at least $\delta/2d$. Now, we 
  apply Theorem \ref{thm:projected_vol} to $Y_k$ and use to get:
  \begin{equation}
	\Vol(\Pi_{S_{k+1}^\perp}Y_k)\leq
    \left(\frac{2d^{2}(d+1)}{\delta}\right)^{\dim
    S_{k+1}-\dim S_{k}}\Vol(Y_k).\label{eq:vol_2}
  \end{equation}
	The containments $\Pi_{S_{k}^{\perp}}X_{k+1}\sse Y_k$ and $S_{k+1}^\perp \sse S_k^\perp$ imply:
  \begin{equation}
	\Vol(\Pi_{S_{k+1}^\perp}X_k)\leq \Vol(\Pi_{S_{k+1}^\perp}Y_k)
	\label{eq:vol_3}
  \end{equation}

  Combining (\ref{eq:vol_1}), (\ref{eq:vol_2}), and (\ref{eq:vol_3}),
  we have that after $T$ steps,
  \begin{align}
    \Vol(\Pi_{S_{T}^{\perp}}X_{T})
    &
      \leq\left(\frac{2d^{2}(d+1)}{\delta}\right)^{\dim
      S_{T}} \cdot
      \left(1-\frac{1}{e}\right)^{T}
      \cdot \Vol(X_{0})\nonumber \\
    &
      \leq\left(\frac{2d^{2}(d+1)}{\delta}\right)^{d}
      \cdot
      \left(1-\frac{1}{e}\right)^{T} \cdot
      O(1)^{d}\label{eq:volT}
  \end{align}
  where we used that $X_{0}$ is contained in the unit ball in
  $\norm{\cdot}_{2}$ and hence has volume at most $O(1)^{O(d)}$.

  Now let $T$ be the last iteration where $S_T^\perp$ has non-zero
  number of dimensions (i.e., the step just before the procedure
  ends). At this point $\Pi_{S_{T}^{\perp}}X_{T-1}$ has minimum width at
  least $\delta$.  Lemma \ref{lem:width_after_cut} shows that it
  contains a ball of radius $\delta/2d$, and hence has volume at least
  $(\delta/d)^{O(d)}$. This gives a lower bound on the potential.

  Combining this lower bound with \eqref{eq:volT} which upper bounds the
  potential after $T$ steps, we have that
  $T=O(d (\log d + \log\nicefrac{1}{\delta}))=O(d\log d)$ using our
  choice of $\delta$.
\end{proof}

\begin{theorem}
  \label{thm:move}
  The total movement of the \Tighten procedure is $O(d \log d)$,
  assuming that at the start $\currA$ is contained in a unit ball.
\end{theorem}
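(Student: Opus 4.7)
The plan is to induct on the dimension $d_A$ of the affine subspace $A$, proving the stronger statement that the total movement of $\Tighten(A)$, starting with $\currA$ contained in a unit ball of $A$, is at most $C \cdot d_A \log d_A$ for some absolute constant $C$. The base case $d_A \leq 1$ is immediate: the greedy subroutine's trajectory is contained in a unit ball and so has movement $O(1)$. For the inductive step, assume the bound holds for all dimensions $d' < d_A$. By Claim~\ref{claim:reduction}, parts (iii)$\Rightarrow$(ii), the inductive hypothesis upgrades to a $\Bounded$ algorithm in dimension $d'$ whose movement on an instance with starting body in a radius-$r$ ball is at most $O((C \, d' \log d' + 1) \cdot r) = O(d' \log d' \cdot r)$.

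Next I bound the cost of a single iteration $k$ of $\Tighten(A)$. The move from the previous output point to $x_k$ costs at most $\mathrm{diam}(\currA) \leq 2$, since $\currA \sse X_0 \sse B(x_0,1)$. For the recursive call $\Bounded(x_k + S_k)$, the body $\currA \cap (x_k + S_k)$ has directional width at most $\delta$ in every direction of $S_k$ by construction of $S_k$, so its Euclidean diameter is at most $\sqrt{d_A}\,\delta$ and, by~(\ref{eq:normass}), its norm diameter is at most $d_A \delta$. Since $\dim S_k < d_A$, the induction applies and the recursive call's movement is at most $O(\dim S_k \log \dim S_k \cdot d_A \delta) = O(d_A^2 \log d_A \cdot \delta)$.

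Combining the $O(d_A \log d_A)$ iteration count from Lemma~\ref{lem:iters} with the per-iteration bound yields
\[
\text{total movement} \;\leq\; O(d_A \log d_A) \cdot \bigl(2 + O(d_A^2 \log d_A \cdot \delta)\bigr).
\]
Substituting the algorithm's choice $\delta = c/(d_A^3 \log d_A)$ for a small enough absolute $c$ makes the parenthesized expression $O(1)$, yielding total movement at most $C \cdot d_A \log d_A$ for an appropriate $C$, and closing the induction.

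The main obstacle is controlling the recursive contribution: since there are $\Theta(d_A \log d_A)$ iterations and the recursive algorithm in dimension $< d_A$ is itself $\Theta(d_A \log d_A)$-competitive, the subproblem's radius must be $O(1/(d_A \log d_A))$ for the recursion to fit within a constant budget per iteration. The aggressively small skinny-threshold $\delta = c/(d_A^3 \log d_A)$ is calibrated for exactly this purpose, absorbing the $d_A$-factor loss from~(\ref{eq:normass}) when converting Euclidean widths into norm diameters.
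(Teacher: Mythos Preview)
Your proof is correct and follows essentially the same approach as the paper: induct on the dimension, use Claim~\ref{claim:reduction} to convert the inductive \Tighten bound into a \Bounded bound for the recursive calls, charge $O(1)$ per iteration for the move to the centroid plus $O(d_A\delta \cdot d_A\log d_A)$ for the recursive \Bounded call on the skinny slice, and multiply by the $O(d_A\log d_A)$ iteration count from Lemma~\ref{lem:iters}, with the choice $\delta = c/(d_A^3\log d_A)$ absorbing the recursion. The only cosmetic difference is that you make the inductive hypothesis and the per-iteration bookkeeping a bit more explicit than the paper does.
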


\begin{proof}
  We induct on the number of dimensions $d$: the base case is when $d=0$
  or $d=1$, in which case the claim is immediate. Hence consider $d\geq2$.  
  Let $f_{T}(d)$ be the total movement of the algorithm
  $\Tighten$ and $f_{B}(d)$ be the total movement of the
  algorithm $\Bounded$. By the reduction in Claim
  \ref{claim:reduction}, 
  \[
    f_{B}(d)\leq O(f_{T}(d)).
  \]
  Next, in each iteration, the diameter of $\currA \cap(x_{k}+S_{k})$ is
  at most $d\delta$ in $\norm{\cdot}$ (by the triangle inequality)
  Therefore, the cost per each iteration
  is at most
  \[
    O(1)+ O(d\delta\cdot f_{B}(d-1))=O(1+d\delta\cdot f_{T}(d-1)),
  \]
  where the first term of $O(1)$ comes from the fact that $\currA$ is
  shrinking and hence is always contained in an unit ball. The second
  term comes from the recursion on a body of diameter $d \delta$ in
  $d-1$ dimensions. In the second term, we bound $f_B(d-1)$ by $O(f_T(d-1))$.

  Since there are at most $O(d\log d)$ iterations from
  Lemma~\ref{lem:iters}, 
  \begin{align*}
    f_{T}(d) & \leq O(d\log d) \cdot O(1+d\delta f_{T}(d-1))\\
             & =O(d\log d+d^{2}\log d \cdot \delta\cdot f_{T}(d-1)).
  \end{align*}
  We choose $\delta\leq\frac{c}{d^{3}\log d}$ for a small enough
  constant $c$, and use $f_T(d-1) = O(d \log (d-1))$ from the inductive
  hypothesis. This gives
  $f_T(d) = O(d \log d + \frac{1}{d}\cdot d \log( d-1)) = O((d+1) \log
  d)$, which proves the theorem.
\end{proof}


\medskip\textbf{Acknowledgments.} We thank Nikhil
Bansal, Niv Buchbinder, Guru Guruganesh, and Kirk Pruhs for useful
conversations. C.J.A.\ and A.G.\ thank Sunny Gakhar
for discussions in the initial stages of this work. This work was supported in part
by NSF awards CCF-1536002, CCF-1540541, CCF-1617790, and
CCF-1740551, and the Indo-US Virtual Networked Joint Center on Algorithms under Uncertainty.

{\small 
\bibliography{bib}

\newcommand{\etalchar}[1]{$^{#1}$}
\providecommand{\bysame}{\leavevmode\hbox to3em{\hrulefill}\thinspace}
\providecommand{\MR}{\relax\ifhmode\unskip\space\fi MR }
\providecommand{\MRhref}[2]{%
  \href{http://www.ams.org/mathscinet-getitem?mr=#1}{#2}
}
\providecommand{\href}[2]{#2}
\begin{thebibliography}{LWR{\etalchar{+}}12}

\bibitem[ABL{\etalchar{+}}13]{AndrewBLLMRW13}
Lachlan L.~H. Andrew, Siddharth Barman, Katrina Ligett, Minghong Lin, Adam
  Meyerson, Alan Roytman, and Adam Wierman, \emph{A tale of two metrics:
  Simultaneous bounds on competitiveness and regret}, {COLT} 2013 - The 26th
  Annual Conference on Learning Theory, June 12-14, 2013, Princeton University,
  NJ, {USA}, 2013, pp.~741--763.

\bibitem[ABN{\etalchar{+}}16]{Anto}
Antonios Antoniadis, Neal Barcelo, Michael Nugent, Kirk Pruhs, Kevin Schewior,
  and Michele Scquizzato, \emph{Chasing convex bodies and functions}, L{ATIN}
  2016: theoretical informatics, Lecture Notes in Comput. Sci., vol. 9644,
  Springer, Berlin, 2016, pp.~68--81. \MR{3492519}

\bibitem[Bal92]{Ball92}
Keith Ball, \emph{Ellipsoids of maximal volume in convex bodies}, Geom.
  Dedicata \textbf{41} (1992), no.~2, 241--250. \MR{1153987}

\bibitem[BBE{\etalchar{+}}17]{Bansal}
Nikhil Bansal, Martin B\"ohm, Marek Eli{\'{a}}\v{s}, Grigorios Koumoutsos, and
  Seeun~William Umboh, \emph{Nested convex bodies are chaseable}, Proceedings
  of the ACM-SIAM symposium on Discrete algorithms (SODA 2018) (2017).

\bibitem[BGK{\etalchar{+}}15]{BGKPS}
Nikhil Bansal, Anupam Gupta, Ravishankar Krishnaswamy, Kirk Pruhs, Kevin
  Schewior, and Cliff Stein, \emph{A 2-competitive algorithm for online convex
  optimization with switching costs}, APPROX, vol.~40, Schloss Dagstuhl., 2015,
  pp.~96--109. \MR{3441958}

\bibitem[BLS92]{BRS}
Allan Borodin, Nathan Linial, and Michael~E. Saks, \emph{An optimal on-line
  algorithm for metrical task system}, J. Assoc. Comput. Mach. \textbf{39}
  (1992), no.~4, 745--763. \MR{1187210}

\bibitem[BN07]{BN-mono}
Niv Buchbinder and Joseph~(Seffi) Naor, \emph{The design of competitive online
  algorithms via a primal-dual approach}, Found. Trends Theor. Comput. Sci.
  \textbf{3} (2007), no.~2-3, 93--263. \MR{2506496 (2010h:68239)}

\bibitem[FL93]{FL93}
Joel Friedman and Nathan Linial, \emph{On convex body chasing}, Discrete
  Comput. Geom. \textbf{9} (1993), no.~3, 293--321. \MR{1204785}

\bibitem[Gr{\"u}60]{Grunbaum}
B.~Gr{\"u}nbaum, \emph{Partitions of mass-distributions and of convex bodies by
  hyperplanes}, Pacific J. Math. \textbf{10} (1960), 1257--1261. \MR{0124818}

\bibitem[KLS95]{kannan1995isoperimetric}
Ravi Kannan, L{\'a}szl{\'o} Lov{\'a}sz, and Mikl{\'o}s Simonovits,
  \emph{Isoperimetric problems for convex bodies and a localization lemma},
  Discrete \& Computational Geometry \textbf{13} (1995), no.~1, 541--559.

\bibitem[LLV17]{LLV}
Ilan Lobel, Renato~Paes Leme, and Adrian Vladu, \emph{Multidimensional binary
  search for contextual decision-making}, Proceedings of the 2017 {ACM}
  Conference on Economics and Computation, {EC} '17, Cambridge, MA, USA, June
  26-30, 2017, 2017, p.~585.

\bibitem[LWR{\etalchar{+}}12]{LinWRMA12}
Minghong Lin, Adam Wierman, Alan Roytman, Adam Meyerson, and Lachlan L.~H.
  Andrew, \emph{Online optimization with switching cost}, {SIGMETRICS}
  Performance Evaluation Review \textbf{40} (2012), no.~3, 98--100.

\bibitem[MP91]{ManP}
Paolo Manselli and Carlo Pucci, \emph{Maximum length of steepest descent curves
  for quasi-convex functions}, Geom. Dedicata \textbf{38} (1991), no.~2,
  211--227. \MR{1104346}

\bibitem[Sit14]{Sitters}
Ren\'e Sitters, \emph{The generalized work function algorithm is competitive
  for the generalized 2-server problem}, SIAM J. Comput. \textbf{43} (2014),
  no.~1, 96--125. \MR{3158795}

\end{thebibliography}
\bibliographystyle{amsalpha}
}

\end{document}